\newtheorem{thm}{Theorem}
\newtheorem{lem}[thm]{Lemma}
\newcounter{ALC@tempcntr}
\newcommand{\LCOMMENT}[1]{%
    \setcounter{ALC@tempcntr}{\arabic{ALC@rem}}
    \setcounter{ALC@rem}{1}
    \item \{#1\}
    \setcounter{ALC@rem}{\arabic{ALC@tempcntr}}
}%
\begin{document}

\title{A Simple Algorithm For Replacement Paths Problem}
\title{A Simple Algorithm For Replacement Paths Problem}
\author{Anjeneya Swami Kare}
\institute{Department of Computer Science and Engineering\\
Indian Institute of Technology\\ Hyderabad, India\\ \mailsa}

\maketitle
\begin{abstract}
 Let $G = (V, E)$ ($|V| = n$ and $|E| = m$) be an undirected graph with positive edge weights.
 Let $P_{G}(s, t)$ be a shortest $s-t$ path in $G$. Let $l$ be the number of edges in $P_{G}(s, t)$.
 The \emph{Edge Replacement Path}  problem is to compute a shortest $s-t$ path in $G\backslash\{e\}$, for every edge
 $e$ in $P_{G}(s, t)$. The \emph{Node Replacement Path} problem is to compute a shortest
 $s-t$ path in $G\backslash\{v\}$, for every vertex $v$ in $P_{G}(s, t)$.

 In this paper we present an $O(T_{SPT}(G)+m+l^2)$ time and $O(m+l^2)$ space algorithm
 for both the problems. Where, $T_{SPT}(G)$ is the asymptotic time to compute a single source
 shortest path tree in $G$. The proposed algorithm is simple and easy to implement.
 \end{abstract}

\keywords{Replacement Path, Replacement Shortest Path, Edge Replacement Path, Node Replacement Path, Shortest Path}

\section{Introduction}
Let $G = (V, E)$ ($|V| = n$ and $|E| = m$) be an undirected graph with a weight
function $w : E \rightarrow \mathbb{R}_{>0}$ on the edges.
Let $P_{G}(s, t) = \{v_{0}=s, v_{1}, \ldots , v_{l-1}, v_{l}=t\}$ be a shortest
$s-t$ path in $G$. Let $l$ denote the  number of edges in $P_{G}(s, t)$, also denoted by $|P_{G}(s, t)|$.
The total weight of the path $P_{G}(s, t)$ is denoted by $d_{G}(s, t)$,
i.e, $d_{G}(s, t) = \sum_{i = 1}^{l}{w(e_{i})}$, where, $e_{i}$ is the
edge $(v_{i-1}, v_{i}) \in P_{G}(s, t)$. A shortest path tree (SPT) of $G$ rooted at $s$ 
(respectively, $t$) is denoted by $T_{s}$ (respectively, $T_{t}$).

A \emph{Replacement Shortest Path} (RSP) for the edge $e_{i}$
(respectively, node $v_{i}$) is a shortest $s-t$ path in $G\backslash{e_{i}}$
(respectively, $G\backslash{v_{i}}$). The \emph{Edge Replacement Path} problem is to compute
RSP for all $e_{i} \in P_{G}(s, t)$. Similarly, the \emph{Node Replacement Path}
problem is to compute RSP for all $v_{i} \in P_{G}(s, t)$. 

Like in all existing algorithms for RSP problem, our algorithm has two phases:
\begin{enumerate}
  \item Computing shortest path trees rooted at $s$ and $t$, $T_{s}$ and $T_{t}$ respectively.
  \item Computing RSP using $T_{s}$ and $T_{t}$.
\end{enumerate}
For graphs with non-negative edge weights, computing
an SPT takes $O(m + n \log n)$ time, using the standard Dijkstra's algorithm~\cite{dijkstra} with
Fibonacci heaps~\cite{fheaps}. However, for integer weighted graphs (RAM model)~\cite{thorup},
planar graphs~\cite{planarspt} and minor-closed graphs~\cite{minorspt},
$O(m+n)$ time algorithms are known. In this paper, to compute SPTs $T_{s}$ and $T_{t}$ (phase 1)
we use the existing algorithms. For phase 2, we present an $O(m + l^2)$ time algorithm
which is simple and easy to implement.

Motivation for studying replacement paths problem is its relevance in single link (or node) recovery protocols.
Other problems which are closely related to replacement paths problem are \emph{Most Vital Edge} problem~\cite{fastermve},
\emph{Most Vital Node} problem~\cite{mostvn} and \emph{Vickrey Pricing}~\cite{vickreyprice}.
Often an algorithm for replacement paths problem is used as a subroutine in finding $k$-simple
shortest paths between a pair of nodes.

\begin{table}[H]
\centering
\begin{tabular}{|c|c|}
  \hline
  \multicolumn{2}{|c|}{\textbf{Edge Replacement Path Problem}}  \\
  \hline
  Reference & Time Complexity  \\
  \hline
  Malik et al.~\cite{kmve} (1989) & $O(T_{SPT}(G)+ m + n \log n)$  \\
  \hline
  Hershberger and Suri~\cite{vickreyprice} (1997) & $O(T_{SPT}(G)+ m + n \log n)$  \\
  \hline
  Nardelli et al.~\cite{fastermve} (2001) & $O(T_{SPT}(G)+ m \alpha(m, n))$  \\
  \hline
  Jay and Saxena~\cite{jay} (2013) & $O(T_{SPT}(G)+ m + d^2)$  \\
  \hline
  Lee and Lu~\cite{linearrsp} (2014) & $O(T_{SPT}(G)+ m + n)$  \\
  \hline
  This Paper & $O(T_{SPT}(G)+ m + l^2)$  \\
  \hline
  \multicolumn{2}{|c|}{\textbf{Node Replacement Path Problem}}  \\
  \hline
  Reference & Time Complexity  \\
  \hline
  Nardelli et al.~\cite{mostvn} (2003) & $O(T_{SPT}(G)+ m + n \log n)$  \\
  \hline
  Jay and Saxena~\cite{jay} (2013) & $O(T_{SPT}(G)+ m + d^2)$  \\
  \hline
  Lee and Lu~\cite{linearrsp} (2014) & $O(T_{SPT}(G)+ m + n)$  \\
  \hline
  Kare and Saxena~\cite{asksax} (2014) & $O(T_{SPT}(G)+ m \alpha(m, n))$  \\
  \hline
  This Paper & $O(T_{SPT}(G)+ m + l^2)$  \\
  \hline
  \end{tabular}
\label{table1}
\caption{Summary of existing algorithms \protect\footnotemark for RSP problem}
\end{table}

\footnotetext{In the referenced papers, authors ignore the term $T_{SPT}(G)$, as they assume
either shortest path trees are given or restriction on the input graph class for which
linear time algorithms are known for SPT}

For the Edge Replacement Path problem Malik et al.~\cite{kmve} and Hershberger
and Suri~\cite{vickreyprice} independently gave $O(T_{SPT}(G)+ m + n \log n)$
time algorithms. Nardelli et al.~\cite{fastermve} gave an $O(T_{SPT}(G)+ m \alpha(m, n))$
time algorithm, where $\alpha$ is the inverse Ackermann function.

For the Node Replacement Path problem Nardelli et al.~\cite{mostvn} gave an $O(T_{SPT}(G)+ m + n \log n)$
time algorithm. Kare and Saxena~\cite{asksax} gave an $O(T_{SPT}(G)+ m \alpha(m, n))$ time algorithm.

Jay and Saxena~\cite{jay} gave an $O(T_{SPT}(G)+ m + d^2)$ algorithm, where $d$ is the diameter of the graph.
Their algorithm can be used to solve both the edge and the node
replacement path problems. They used linear time algorithms for Range Minima Query
(RMQ)~\cite{rmq} and integer sorting in their solution. A total of $2l$ instances, each of
RMQ and integer sorting has been used (with size of each instance at most $l$).
Recently, Lee and Lu~\cite{linearrsp} gave an $O(T_{SPT}(G)+ m + n)$ time algorithm.
Table~\ref{table1} summarises the existing algorithms for RSP problem.

In this paper, we present an $O(T_{SPT}(G)+ m + l^2)$ time and $O(m + l^2)$ space algorithm.
The asymptotic complexity of our algorithm matches that of~\cite{jay}. However, our solution does not use RMQ
and integer sorting. Our algorithm organizes the non-tree edges of the graph in a simple manner.
Moreover, an advantage of our algorithm over~\cite{jay} and~\cite{linearrsp} is that, in a single iteration both
edge and node replacement paths can be obtained, whereas in~\cite{jay} and~\cite{linearrsp} the algorithm has to
be run independently for the edge and node replacement paths. Note that, linear time algorithm for RMQ~\cite{rmq}
and the algorithm in~\cite{linearrsp} are complex to implement. The simplicity of our algorithm
makes it an ideal candidate for the RSP. In particular, for dense graphs and graphs with small diameter
($l \leq diameter(G) = O(\sqrt m$)) our algorithm is optimal and matches with that of~\cite{linearrsp}.
As observed in~\cite{jay}, graphs in real world data sets have small diameter, which further adds significance
to our algorithm.

The contribution of this paper is summarized in the following theorem.

\begin{thm}
\label{thm1}
There is an algorithm for the edge and the node replacement path problems that runs in
$O(T_{SPT}(G)+ m + l^2)$ time using $O(m + l^2)$ space.
\end{thm}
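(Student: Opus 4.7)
My plan is to follow the two-phase template already described: use any available SPT algorithm for phase~1, so that the trees $T_{s}$ and $T_{t}$ are produced in $T_{SPT}(G)$ time, and then design a phase~2 that runs in $O(m + l^{2})$ time. Correctness of phase~2 rests on the standard structural observation (see e.g.~\cite{vickreyprice}) that every RSP for an edge $e_{i}$ on $P_{G}(s,t)$ can be written as the concatenation of a prefix coinciding with the $T_{s}$-path from $s$ to some vertex $x$, a single ``jump'' edge $(x,y)$ lying outside $P_{G}(s,t)$, and a suffix coinciding with the $T_{t}$-path from $y$ to $t$, where both the prefix and the suffix avoid $e_{i}$; its weight is therefore $d_{G}(s,x) + w(x,y) + d_{G}(y,t)$.

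The central piece of bookkeeping I need from $T_{s}$ and $T_{t}$ is the following. For every vertex $u$, let $L_{s}(u)$ denote the largest index $j$ such that $v_{j}$ lies on the $T_{s}$-path from $s$ to $u$, and let $L_{t}(u)$ denote the smallest index $j$ such that $v_{j}$ lies on the $T_{t}$-path from $u$ to $t$; in particular $L_{s}(v_{j}) = L_{t}(v_{j}) = j$ for path vertices. Both arrays can be filled by a single DFS on each tree in $O(n)$ time, by marking the $l+1$ path vertices and pushing the most recently seen marked index down the tree. A non-path edge $(x,y)$ oriented so that $x$ sits on the $s$-side of the detour then avoids $e_{i}$ exactly when $L_{s}(x) < i \leq L_{t}(y)$. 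Both orientations of each non-path edge are processed; each gives one candidate pair $(\alpha,\beta) = (L_{s}(x), L_{t}(y))$ with $0 \leq \alpha < \beta \leq l$.

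In phase~2 I would then bucket-sort the candidates into a two-dimensional table $M[\alpha][\beta]$ of size at most $l \times l$, keeping in each cell the minimum weight of the candidates falling into it; this takes $O(m + l^{2})$ time and $O(l^{2})$ space. The RSP weight for $e_{i}$ is $\min\{M[\alpha][\beta] : \alpha < i \leq \beta\}$, a two-dimensional dominance minimum. I would compute these values for all $i$ simultaneously in $O(l^{2})$ via two linear sweeps over $M$: first a column-wise prefix-min in $\alpha$, then a row-wise suffix-min in $\beta$. After the sweeps, the cost of the edge RSP for $e_{i}$ is read off as $M[i-1][i]$. For the node problem the condition becomes the strict pair $\alpha < i < \beta$ so that $v_{i}$ belongs to neither the prefix nor the suffix; running the same sweep on a parallel copy of $M$ (equivalently, reading $M[i-1][i+1]$ on a suitably shifted table) delivers the node RSPs in the same scan, which is the source of the algorithm's advantage over~\cite{jay} and~\cite{linearrsp}.

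The main obstacle I anticipate is the correctness of the edge classification when an endpoint $x$ or $y$ of a non-path edge lies on $P_{G}(s,t)$, because then the $T_{s}$-prefix or the $T_{t}$-suffix coincides with part of $P_{G}(s,t)$ and could, in principle, reintroduce the forbidden edge $e_{i}$. The convention $L_{s}(v_{j}) = L_{t}(v_{j}) = j$ fixed above makes the inequalities $L_{s}(x) < i$ and $L_{t}(y) \geq i$ (respectively $>i$ for the node variant) exactly rule out such cases, so the correctness argument collapses to checking that for each $e_{i}$ the cell $M[i-1][i]$ really captures the best legal detour; the matching structural claim from~\cite{vickreyprice} supplies the other direction. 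Summing $T_{SPT}(G) + O(m+n) + O(m) + O(l^{2})$ gives the time bound claimed in Theorem~\ref{thm1}, and the $O(m + l^{2})$ space bound follows from the storage of the input, the two SPTs, and the table $M$.
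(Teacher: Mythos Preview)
Your treatment of the edge replacement problem is essentially the paper's approach in different clothing: your table $M[\alpha][\beta]$ with a prefix/suffix--min sweep is isomorphic to the paper's RSP-DAG with a BFS push-down, and both take $O(m+l^{2})$ time. So that half is fine.

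The gap is in the node replacement half. Your structural claim---that an RSP for $v_{i}$ can always be written as a $T_{s}$-prefix to $x$, one non-path edge $(x,y)$, and a $T_{t}$-suffix from $y$, with $L_{s}(x)<i<L_{t}(y)$---is false. When $v_{i}$ is deleted, $T_{s}$ splits into three pieces: $T_{1}(v_{i})$, $T_{2}(v_{i})$, and a forest $F(v_{i})$ of subtrees that hang off $v_{i}$ but are not in $T_{2}(v_{i})$. An optimal replacement path may enter $F(v_{i})$ and leave it again, using one non-tree edge to get in and a second non-tree edge to get out. Concretely, take $P=v_{0}v_{1}v_{2}v_{3}$ and a vertex $u$ whose $T_{s}$-parent is $v_{1}$ and whose $T_{t}$-path to $t$ also goes through $v_{1}$, with the only extra edges being $(v_{0},u)$ and $(u,v_{2})$. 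The unique $s$--$t$ path in $G-v_{1}$ is $v_{0}\to u\to v_{2}\to v_{3}$; it uses two jump edges, and no single edge has $L_{s}$-endpoint $<1$ and $L_{t}$-endpoint $>1$, so your table never sees it. Reading $M[i-1][i+1]$ only recovers what the paper calls $d'_{G-v_{i}}(s,t)$ (crossing edges from $T_{1}$ directly to $T_{2}$); you are missing the $d''_{G-v_{i}}(s,t)$ term coming from crossing edges in $C''(v_{i})=\{(x,y):x\in F(v_{i}),\,y\in T_{2}(v_{i})\}$, for which the relevant prefix length is $d_{G-v_{i}-T_{2}(v_{i})}(s,x)$, not $d_{G}(s,x)$. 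The paper computes these partial distances by building an auxiliary contracted graph $\widetilde{G}=\bigcup_{i}G_{i}$ and running one more SPT on it; without that ingredient the node case is not handled, and Theorem~\ref{thm1} is not established.
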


The rest of the report is organised as follows. In Section~\ref{ersp} we discuss the
algorithm for the edge replacement path problem. In Seection~\ref{nrsp} we discuss the
algorithm for node replacement path problem. We conclude with Section~\ref{conc}.
\section{Edge Replacement Paths}
\label{ersp}
We start by computing shortest path trees $T_{s}$ and $T_{t}$. In the rest of the section
we describe the algorithm for computing RSP using $T_{s}$ and $T_{t}$ (phase 2).

A potential replacement path for the edge $e_{i} = (v_{i-1}, v_{i})$ can be
seen as a concatenation of three paths $A$, $B$ and $C$, where, $A = s \leadsto v_{k} \in P_{G}(s, t)$ for
some $0 \leq k < i$, $B = v_{k} \leadsto v_{r} \in G \backslash E(P_{G}(s, t))$  for some $i \leq r \leq l$ and
$C = v_{r} \leadsto t \in P_{G}(s, t)$ as shown in \figurename~\ref{fig1}. Here, the symbol $\leadsto$ represents a path in $G$.
One extreme case is when $|A| = 0$ and $|C| = 0$ (i. e. $v_{k} = s$ and $v_{r} = t$) as shown in \figurename~\ref{fig1}(b).
Such a replacement path is also a potential replacement path for all the edges $e_{i} \in P_{G}(s, t)$. The other extreme case is when $|A| = i - 1$
and $|C| = l - i$ (i.e. $v_{k} = v_{i-1}$ and $v_{r} = v_{i}$) as shown in \figurename~\ref{fig1}(c).
Such a replacement path is a potential replacement path only for the edge $e_{i}$.

\begin{figure}[!ht]
\centering
\includegraphics[width=4.0in]{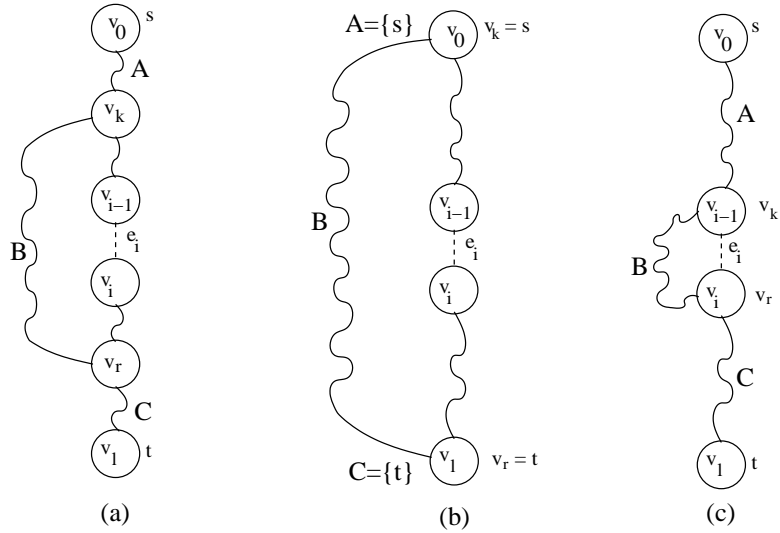}
\caption{Potential replacement paths for the edge $e_{i}$.
The zig-zag lines represent a path} \label{fig1}
\end{figure}

Consider the shortest path tree rooted at $s$ ($T_{s}$). When the edge $e_{i} = (v_{i-1},v_{i})$
is removed from $T_{s}$, $T_{s}$ is disconnected into two sub-trees. $T_{1}(e_{i})$ (sub-tree rooted at $s$)
and $T_{2}(e_{i})$ (sub-tree rooted at $v_{i}$). The vertex sets of $T_{1}(e_{i})$ and $T_{2}(e_{i})$
determine a cut in the graph $G$. Let $C(e_{i})$ denote the set of all non-tree edges crossing the cut.
These edges are called crossing edges, i.e,
$C(e_{i}) = \{ (x, y) \in E \backslash {e_{i}} | x \in T_{1}(e_{i}) \wedge y \in T_{2}(e_{i}) \}$.
In order to have a replacement path, the set $C(e_{i})$ needs to be nonempty. And,
any replacement path must use at least one crossing edge from $C(e_{i})$.
Moreover, as we see from the Lemmas~\ref{lemmadist} and~\ref{lemmace} there exists
an RSP that uses exactly one crossing edge.


\begin{lem}
\label{lemmadist}
For all $(x, y) \in C(e_{i})$, $d_{G-{e_{i}}}(s,x)$ = $d_{G}(s,x)$ and $d_{G-{e_{i}}}(y,t)$ = $d_{G}(y,t)$.
\end{lem}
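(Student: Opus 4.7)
The plan is to treat the two equalities separately, since they require different levels of effort.

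The first equality, $d_{G-e_i}(s,x) = d_G(s,x)$, follows almost immediately from the definition of $T_1(e_i)$: because $x \in T_1(e_i)$, the unique $T_s$-path from $s$ to $x$ lies entirely inside $T_1(e_i)$ and hence avoids $e_i$. That path has length $d_G(s,x)$, so $d_{G-e_i}(s,x) \leq d_G(s,x)$, while the reverse inequality is trivial since deleting an edge can never decrease a distance.

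The second equality, $d_{G-e_i}(y,t) = d_G(y,t)$, is where the real work happens, and I would prove the stronger statement that no shortest $y$-$t$ path in $G$ traverses $e_i$. The crucial auxiliary step is the pair of identities
\[
d_G(v_{i-1}, y) = w(e_i) + d_G(v_i, y), \qquad d_G(v_{i-1}, t) = w(e_i) + d_G(v_i, t).
\]
The second is immediate because $v_{i-1}, v_i$ are consecutive on the optimal path $P_G(s,t)$, so the $v_{i-1}$-to-$t$ subpath of $P_G(s,t)$ realises both sides. For the first, the $\leq$ direction is the triangle inequality applied to the path that takes $e_i$ from $v_{i-1}$ to $v_i$ and then follows $T_s$ inside $T_2(e_i)$ down to $y$. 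For the $\geq$ direction I would use the defining property of $T_2(e_i)$: if $d_G(v_{i-1}, y) < w(e_i) + d_G(v_i, y)$, then concatenating the $T_s$-path from $s$ to $v_{i-1}$ with a shortest $v_{i-1}$-to-$y$ path would yield an $s$-to-$y$ walk of length strictly less than $d_G(s, v_{i-1}) + w(e_i) + d_G(v_i, y) = d_G(s, v_i) + d_G(v_i, y) = d_G(s, y)$, a contradiction.

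With these identities in hand, suppose for contradiction that some shortest $y$-to-$t$ path $P$ uses $e_i$. Splitting $P$ at its traversal of $e_i$ gives $P = Q_1 \cdot e_i \cdot Q_2$, where either $Q_1$ runs from $y$ to $v_{i-1}$ and $Q_2$ from $v_i$ to $t$, or the symmetric variant holds. In either case $|Q_1|$ and $|Q_2|$ are bounded below by the corresponding $d_G$-distances, and substituting the two identities yields
\[
|P| \ \geq\ 2\,w(e_i) + d_G(y, v_i) + d_G(v_i, t) \ \geq\ 2\,w(e_i) + d_G(y, t).
\]
Since $w(e_i) > 0$, this contradicts $|P| = d_G(y,t)$. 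Hence every shortest $y$-to-$t$ path avoids $e_i$, and the second equality follows. The main obstacle throughout is the $\geq$ direction of the identity $d_G(v_{i-1}, y) = w(e_i) + d_G(v_i, y)$, since this is the only place where the hypothesis $y \in T_2(e_i)$ is used in a nontrivial way; once that identity is available, the concluding case analysis is routine.
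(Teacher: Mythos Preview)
Your proposal is correct and follows essentially the same contradiction strategy as the paper's own proof: use $y\in T_2(e_i)$ to control the relationship between $d_G(v_{i-1},y)$ and $d_G(v_i,y)$, and combine this with the known relationship for $t$ to rule out any shortest $y$--$t$ path traversing $e_i$. The paper simply asserts that ``$P_G(y,t)$ uses $e_i$'' implies $d_G(v_i,y)>d_G(v_{i-1},y)$ and that $y\in T_2(e_i)$ implies the reverse inequality; your version is more careful in two respects---you upgrade the inequality to the exact identity $d_G(v_{i-1},y)=w(e_i)+d_G(v_i,y)$, and you explicitly handle both possible orientations in which $P$ could cross $e_i$---but the underlying argument is the same.
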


\begin{proof}
If $(x,y) \in C(e_{i})$, then $x \in T_{1}(e_{i})$ and $y \in T_{2}(e_{i})$.
Shortest $s-x$ path is fully in $T_{1}(e_{i})$ and does not include
the edge $e_{i}$. Hence, $d_{G-{e_{i}}}(s,x)$ = $d_{G}(s,x)$.

To prove $d_{G-{e_{i}}}(y,t) = d_{G}(y,t)$, for the sake of contradiction,
let us assume that $d_{G-{e_{i}}}(y,t) \neq d_{G}(y,t)$ (i.e $d_{G-{e_{i}}}(y,t) > d_{G}(y,t)$).
It means that, $P_{G}(y,t)$ uses the edge $e_{i} = (v_{i-1}, v_{i})$.
This implies $d_{G}(v_{i}, y) > d_{G}(v_{i-1}, y)$.
Since $y \in T_{2}(e_{i})$, $d_{G}(v_{i-1}, y) > d_{G}(v_{i}, y)$ a contradiction.
Hence, $d_{G-{e_{i}}}(y,t) = d_{G}(y,t)$.
\qed
\end{proof}

\begin{lem}
\label{lemmace}
For any edge $e_{i} \in P_{G}(s,t)$, there exists a shortest $s-t$ path in $G-{e_{i}}$
which contains exactly one edge from $C(e_{i})$.
\end{lem}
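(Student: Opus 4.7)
The plan is to start from any shortest $s$-$t$ path $P^{*}$ in $G - e_{i}$ and modify it so that it uses exactly one edge of $C(e_{i})$ without increasing its total weight. Since $s \in T_{1}(e_{i})$ and $t \in T_{2}(e_{i})$, and the only tree edge crossing the cut is $e_{i}$ itself (which has been removed), $P^{*}$ must traverse at least one edge of $C(e_{i})$ (we may tacitly assume $C(e_{i}) \neq \emptyset$, otherwise no replacement path exists at all).

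Among the crossing edges used by $P^{*}$, I would pick the \emph{last} one along the path, call it $(x, y)$ with $x \in T_{1}(e_{i})$ and $y \in T_{2}(e_{i})$, and decompose $P^{*}$ as $s \leadsto x \to y \leadsto t$. By the choice of $(x, y)$ as the last crossing, the suffix $y \leadsto t$ is contained entirely in $T_{2}(e_{i})$; since $P^{*}$ is shortest in $G - e_{i}$, this suffix has weight $d_{G - e_{i}}(y, t)$, which equals $d_{G}(y, t)$ by Lemma~\ref{lemmadist}.

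For the prefix $s \leadsto x$, I would \emph{replace} it by the unique tree path from $s$ to $x$ in $T_{s}$. Because $x \in T_{1}(e_{i})$, this tree path lies entirely in $T_{1}(e_{i})$ and hence uses neither $e_{i}$ nor any edge of $C(e_{i})$; its weight is $d_{G}(s, x)$, which by Lemma~\ref{lemmadist} equals $d_{G - e_{i}}(s, x)$, the weight of the corresponding prefix of $P^{*}$. So this substitution does not increase the total weight.

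The resulting concatenation is a valid $s$-$t$ walk in $G - e_{i}$ of weight at most $|P^{*}|$, hence itself a shortest $s$-$t$ path in $G - e_{i}$, and by construction its only edge in $C(e_{i})$ is $(x, y)$. The only delicate point is making sure the suffix after the chosen crossing edge really stays inside $T_{2}(e_{i})$; this is precisely why we pick the \emph{last} crossing on $P^{*}$, after which the rest is a direct application of Lemma~\ref{lemmadist}.
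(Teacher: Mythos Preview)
Your proposal is correct and follows essentially the same approach as the paper: take a shortest $s$--$t$ path in $G-e_i$, pick the last crossing edge $(x,y)$, and replace the prefix $s\leadsto x$ by the tree path in $T_1(e_i)$. Your write-up is in fact more careful than the paper's, since you explicitly invoke Lemma~\ref{lemmadist} to justify that the substitution does not increase the weight and you spell out why choosing the \emph{last} crossing guarantees the suffix stays in $T_2(e_i)$.
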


\begin{proof}
Let us consider a shortest $s-t$ path in $G-{e_{i}}$ (say $P_{1}$) which
uses more than one crossing edge from $C(e_{i})$. Let $(x, y)$ be the last crossing edge
in $P_{1}$. Clearly $x \in T_{1}(e_{i})$ and $y \in T_{2}(e_{i})$. By replacing the part of $P_{1}$ from
$s$ to $x$, by the $s \leadsto x$ path in $T_{1}(e_{i})$, we get a new path which is not
longer than $P_{1}$ and uses exactly one edge from $C(e_{i})$.
\qed
\end{proof}

Using the Lemmas~\ref{lemmadist} and~\ref{lemmace}, we write the total weight of the RSP for
the edge $e_{i}$
as:
\begin{equation}
\label{ersplen}
d_{G-e_{i}}(s,t)=\min_{(x',y')\in C(e_{i})} \{d_{G}(s,x')+w(x', y')+d_{G}(y',t)\}.
\end{equation}

All the terms in the equation (\ref{ersplen}) are available in constant time
for a fixed $(x',y')$ from $T_{s}$ and $T_{t}$. Let $(x,y)$ be the crossing edge that minimizes
the RHS of the equation (\ref{ersplen}). We call that $(x,y)$ the \emph{swap edge}. If we have the swap edge, we can report
the RSP as $s \leadsto x \rightarrow y \leadsto t$ in constant time. Every non-tree edge can be a potential crossing edge for every edge in $P_{G}(s, t)$. So, solving equation (\ref{ersplen}) by brute force gives us $O(ml)$ time algorithm. In this paper we present an $O(m + l^2)$ time algorithm.
In the rest of the paper, we concentrate on computing the swap edge for each $e_{i} \in P_{G}(s, t)$.
\subsection{Labeling the nodes of $G$}
\label{labeling}
Every vertex of $G$ is labeled with an integer value from $0$ to $l$, with
respect to the shortest path tree $T_{s}$. The process of labeling is as follows:

Let $T_{v_{i}}$ be the sub-tree rooted at the node $v_{i}$ in $T_{s}$.
All the nodes in the sub-tree $T_{v_{l}}$ are labeled with the integer value $l$.
For $0 \leq i < l$, all the nodes in the sub-tree $T_{v_{i}} \backslash T_{v_{i+1}}$
are labeled with the integer value $i$. See \figurename~\ref{fig2}(a) for an example labeling.

Using pre-order traversal on $T_{s}$, we compute the labels of all the vertices in linear time.
We start pre-order traversal from the source vertex $s$ using zero as initial label.
While visiting the children of a node recursively, the child node part of $P_{G}(s,t)$ (if any)
will be visited last with an incremented label.
Let $label(v)$ denote the label of a vertex $v$ in $G$. The following Lemma is straightforward.

\begin{lem}
\label{lemmalabel1}
A non-tree edge $(x,y) \in C(e_{i})$ if and only if $label(x) < i$ and $label(y) \geq i$.
In other words, for a non-tree edge $(x,y)$, if $label(x) = i$ and $label(y) = i + r$
for some $r > 0$, then $(x,y) \in C(e_{j})$, $\forall (i < j \leq i + r)$.
\end{lem}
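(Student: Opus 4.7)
The plan is to identify the labels with subtree membership in $T_s$, then read off the crossing-edge condition directly from the definition of $C(e_i)$.

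First I would establish the key invariant: for any vertex $v$ and any $0 \le i \le l$,
$$label(v) \ge i \iff v \in T_{v_i}.$$
This follows immediately from the labeling rule, since the sets $T_{v_l} \subset T_{v_{l-1}} \subset \cdots \subset T_{v_0} = T_s$ are nested, and a vertex receives label exactly the largest $i$ such that $v \in T_{v_i}$. In particular, $label(v) < i$ iff $v \notin T_{v_i}$.

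Next I would use the definition of the cut induced by removing $e_i = (v_{i-1}, v_i)$ from $T_s$: the component containing $v_i$ is precisely the subtree $T_{v_i}$, so $T_2(e_i) = T_{v_i}$ and $T_1(e_i) = V(T_s) \setminus T_{v_i}$. Combining this with the invariant above, for a non-tree edge $(x,y)$ we have
$$(x,y) \in C(e_i) \iff x \in T_1(e_i) \ \wedge\ y \in T_2(e_i) \iff x \notin T_{v_i} \ \wedge\ y \in T_{v_i} \iff label(x) < i \ \wedge\ label(y) \ge i,$$
which gives the first claim of the lemma.

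For the second, restated form, suppose $(x,y)$ is a non-tree edge with $label(x) = i$ and $label(y) = i + r$ for some $r > 0$. For any $j$ with $i < j \le i + r$ we have $label(x) = i < j$ and $label(y) = i + r \ge j$, so by the first part $(x,y) \in C(e_j)$. The only subtlety to verify carefully is that the pre-order traversal described in the text actually produces labels consistent with the invariant above, namely that visiting the child of each $v_i$ lying on $P_G(s,t)$ last (with an incremented label) is what guarantees that the nodes in $T_{v_i} \setminus T_{v_{i+1}}$ form a contiguous interval receiving label $i$; this is routine and is the only place where the traversal order matters. No other step presents any real obstacle.
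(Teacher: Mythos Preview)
Your proof is correct; the paper in fact omits a proof entirely, merely declaring the lemma ``straightforward.'' Your argument---identifying $T_2(e_i)$ with the subtree $T_{v_i}$, characterizing $label(v)\ge i$ as membership in $T_{v_i}$, and then reading off the crossing-edge condition---is exactly the natural elaboration of what the paper leaves implicit.
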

\subsection{Computing Swap Edges}
We construct a directed acyclic graph which will aid us in computing the swap edges.
We call this DAG as RSP-DAG, denoted by $\widehat{G}$.
The following algorithm explains the construction of the RSP-DAG.
An example RSP-DAG is shown in \figurename~\ref{fig2}(b).

\begin{algorithm}[H]
    \caption{Algorithm to construct the RSP-DAG $\widehat{G}$ = $(\widehat{V},\widehat{E})$.}
    \label{rspdag}
    \begin{algorithmic}[1]
    \LCOMMENT{Adding Nodes. Each node is identified by an ordered pair $(i,j)$}
    \STATE $\widehat{V} = \emptyset$
    \STATE $\widehat{E} = \emptyset$
    \FOR{$i = 0 \mbox{ to } l-1$}
        \FOR{$j = i+1 \mbox{ to } l$}
           \STATE $\widehat{V} = \widehat{V} \cup (i,j)$
        \ENDFOR
    \ENDFOR
    \LCOMMENT{Adding Edges}
    \FORALL{ $\widehat{u} = (i, j) \in \widehat{V}$}
        \IF{$j - i > 1$}
            \STATE $\widehat{E} = \widehat{E} \cup ((i,j), (i,j-1))$
            \STATE $\widehat{E} = \widehat{E} \cup ((i,j), (i+1,j))$
        \ENDIF
    \ENDFOR
    \end{algorithmic}
\end{algorithm}

Clearly, the number of vertices in the RSP-DAG is $O(l^2)$ and the number of edges is also
$O(l^2)$. Every node has in-degree and out-degree at most two.
The node with identifier $(0,l)$ has zero in-degree. Nodes $(i, i+1), \forall (0 \leq i < l)$
have zero out-degree (sink nodes).

\begin{figure}[!ht]
\centering
\includegraphics[width=4.8in,height=2.3in]{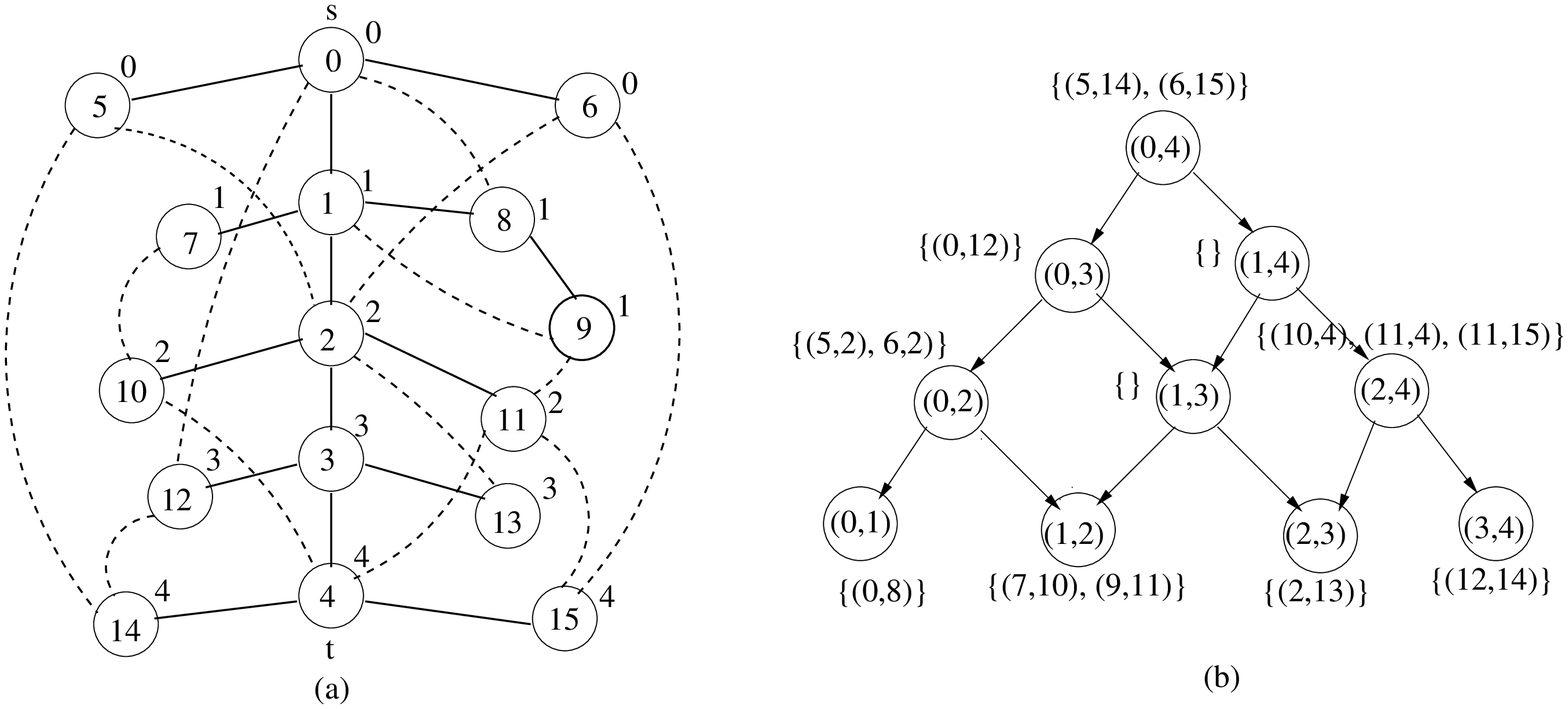}
\caption{(a)An SPT rooted at $s$. Solid lines are part of the SPT. Dashed lines represent
the non-tree edges (we omit the edge weights). Number inside the vertex circle denotes the vertex number, where as
the number above the vertex circle represents vertex label.\newline
(b)Corresponding RSP-DAG with set of non-tree edges associated with nodes} \label{fig2}
\end{figure}

For each node $\widehat{u}=(i,j) \in \widehat{V}$, we associate a set $E_{(i,j)}$ of crossing edges.
This set includes all the non-tree edges $(x,y)$ such that $label(x)=i$ and $label(y)=j$.
This association of crossing edges partitions the crossing edges into disjoint sets.

\begin{lem}
If the swap edge $(x,y)$ for the tree edge $e_{i} \in P_{G}(s,t)$ is present in
the edge set ($E_{(j,k)}$) of a node $\widehat{u} = (j, k) \in \widehat{V}$, then there exists a directed path
from the node $\widehat{u}$ to the node $\widehat{w} = (i-1,i) \in \widehat{V}$ in the RSP-DAG.
\end{lem}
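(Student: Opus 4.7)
The plan is to unfold the hypothesis through Lemma~\ref{lemmalabel1} to obtain inequalities that place $(j,k)$ in the proper quadrant of the RSP-DAG, and then exhibit an explicit directed path to $(i-1,i)$.

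First I would observe that since $(x,y) \in E_{(j,k)}$, by the construction of the edge sets we have $label(x)=j$ and $label(y)=k$. Because $(x,y)$ is the swap edge for $e_i$, it lies in $C(e_i)$, so Lemma~\ref{lemmalabel1} gives $label(x)<i$ and $label(y)\geq i$, that is, $j\leq i-1$ and $k\geq i$. In particular, $k-j\geq 1$, and $(i-1,i)$ is a valid vertex of $\widehat{V}$ since $0\leq i-1<i\leq l$.

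Next I would construct an explicit path witnessing the claim. The edges of $\widehat{G}$ out of a node $(a,b)$ with $b-a>1$ go to $(a,b-1)$ and to $(a+1,b)$; so the idea is to first increment the first coordinate from $j$ up to $i-1$ and then decrement the second coordinate from $k$ down to $i$, giving
\[
(j,k)\to(j+1,k)\to\cdots\to(i-1,k)\to(i-1,k-1)\to\cdots\to(i-1,i).
\]
To justify this, I would check step by step that every intermediate node $(a,b)$ used along the way satisfies $0\leq a<b\leq l$ (so it lies in $\widehat{V}$) and that $b-a>1$ whenever we still need to take another step (so that the corresponding edge of $\widehat{G}$ exists). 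For the first segment, $(a,k)\to(a+1,k)$ with $j\leq a\leq i-2$, we have $k-a\geq i-(i-2)=2>1$; for the second segment, $(i-1,b)\to(i-1,b-1)$ with $i+1\leq b\leq k$, we have $b-(i-1)\geq 2>1$. The bounds on the coordinates follow from $0\leq j$, $i-1\leq l-1$, $i\leq k\leq l$.

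This is really just a coordinate-chasing argument, and I do not expect any genuine obstacle: the only thing to be careful about is the boundary condition $b-a>1$ that controls whether a node actually has outgoing edges in $\widehat{G}$, and the computation above shows it is satisfied at every step except the last node $(i-1,i)$, which is exactly the (sink) destination we want.
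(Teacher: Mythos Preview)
Your proof is correct and follows essentially the same approach as the paper: both first derive $j\leq i-1$ and $k\geq i$ from the fact that $(x,y)\in C(e_i)$, and then argue that the edge structure of the RSP-DAG provides a directed path down to the sink $(i-1,i)$. Your version is in fact more explicit than the paper's, exhibiting a concrete path and carefully checking the out-degree condition $b-a>1$ at every intermediate node, whereas the paper simply asserts that one ``eventually'' reaches $(i-1,i)$.
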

\begin{proof}
Clearly $j \leq i-1$ and $k \geq i$, otherwise, $(x,y)$ will not be the crossing
edge for $e_{i}$. If $\widehat{u}$ is a sink node ($\widehat{u} = \widehat{w}$) in the RSP-DAG, then
the theorem is trivially true.

Otherwise, if we observe the way edges are added in the RSP-DAG, for the node
$\widehat{u} = (j, k) \in \widehat{V}$, two directed edges $((j,k), (j,k-1))$ and $((j,k), (j+1,k))$
are added and from these nodes, we keep adding edges to the lower level nodes in the RSP-DAG. We will
eventually connect to the leaf node $\widehat{w} = (i-1,i) \in \widehat{V}$. Hence there is
a directed path from $\widehat{u}$ to $\widehat{w}$.
\qed
\end{proof}

Now we make a BFS traversal on the RSP-DAG starting from the node with identifier
$(0,l)$. During the traversal, at every node, the minimum cost non-tree edge $(x,y)$
(cost being $d(s,x) + w(x,y) + d(y,t)$) from the corresponding edge set is inserted into the edge sets of its
two children. By the end of this process, minimum cost non-tree edges in the respective
sink nodes give us the swap edges.

\begin{thm}
\label{thm2}
There is an algorithm for
the Edge Replacement Path problem that runs in $O(T_{SPT}(G)+ m + l^2)$ time using $O(m + l^2)$ space.
\end{thm}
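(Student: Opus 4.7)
The plan is to combine the pieces assembled in Section~\ref{ersp}: the cost decomposition in equation~(\ref{ersplen}), the label-based characterization of crossing edges (Lemma~\ref{lemmalabel1}), and the ancestor property of the RSP-DAG proved just above, and then verify that the BFS propagation scheme correctly computes a swap edge at each sink in $O(l^2)$ time.

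First I would establish correctness. The goal is to show that, at the moment the BFS processes a sink $(i-1,i)$, its edge set contains an edge minimizing $d_G(s,x) + w(x,y) + d_G(y,t)$ over $C(e_i)$. By Lemma~\ref{lemmalabel1}, $C(e_i) = \bigcup_{j < i \leq k} E_{(j,k)}$, and these index pairs $(j,k)$ are precisely the ancestors of $(i-1,i)$ in $\widehat{G}$ by the preceding lemma. Observe next that every edge of $\widehat{G}$ strictly decreases the \emph{width} $j-i$ of a node, so the BFS levels from $(0,l)$ coincide with a topological order. I would then prove by induction on this order that, after a node $\widehat{v}$ is processed, the minimum-cost edge stored at $\widehat{v}$ equals the minimum over all $E_{\widehat{u}}$ with $\widehat{u}$ an ancestor of $\widehat{v}$ (including $\widehat{v}$ itself). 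The base case $(0,l)$ is immediate; the inductive step follows because, when a parent is processed, it hands its current minimum down to $\widehat{v}$, and every parent sits at a strictly earlier BFS level. Plugging this into equation~(\ref{ersplen}) shows that the minimum at sink $(i-1,i)$ is a swap edge for $e_i$, and the RSP for $e_i$ is then reported in $O(1)$ as $s \leadsto x \rightarrow y \leadsto t$ using $T_s$ and $T_t$.

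Next I would check the resource bounds. Computing $T_s$ and $T_t$ costs $O(T_{SPT}(G))$. The pre-order labeling of $T_s$ and the bucketing of the non-tree edges into the sets $E_{(i,j)}$ each take $O(m+n) = O(m)$ time. The RSP-DAG has $O(l^2)$ vertices and $O(l^2)$ edges and is built in $O(l^2)$. During the BFS, each node need only maintain a single running minimum rather than an entire set, so each DAG edge performs $O(1)$ comparison work, giving a total of $O(l^2)$. Summing yields $O(T_{SPT}(G) + m + l^2)$ time. For space, the input graph and the two SPTs take $O(m)$, the RSP-DAG and its per-node minima take $O(l^2)$, and the initial bucketing of non-tree edges takes $O(m)$, for $O(m + l^2)$ overall.

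The main obstacle I expect is the inductive step: justifying that propagating only the single best edge at each node (rather than forwarding every crossing edge) still produces correct sink minima. Once $\widehat{G}$ is recognized as a DAG stratified by the width $j-i$ with all edges pointing strictly downward in width, this reduces to a standard ``accumulate a minimum along all root-to-node paths'' computation on a DAG of maximum in-degree two, after which correctness and the $O(l^2)$ running time follow immediately.
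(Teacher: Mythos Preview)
Your proposal is correct and follows essentially the same approach as the paper: build the RSP-DAG, bucket the non-tree edges by their label pairs, and propagate minima downwards via BFS, then read off swap edges at the sinks. Your write-up is in fact more explicit than the paper on two points the paper leaves implicit: you verify that BFS order is a topological order (because every arc decreases the width $j-i$ by exactly one), and you justify that propagating only a single running minimum per node suffices; the paper instead accounts for the extraction cost as $\sum_{i<j}(|E_{(i,j)}|+2)=O(m+l^2)$, which is the same bound reached by a slightly different bookkeeping.
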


\begin{proof}
$T_{SPT}(G)$ represents the time to compute SPTs $T_{s}$ and $T_{t}$.
Construction of the RSP-DAG takes $O(m + l^2)$ time and $O(m + l^2)$ space. BFS traversal on the RSP-DAG
takes $O(l^2)$ time. During the traversal at each node $(i,j) \in \widehat{V}$, we
extract the minimum cost non-tree edge from the set of size at most $|E_{(i,j)}| + 2$.
Time complexity of overall edge extraction steps is:
$\sum_{i<j}|E_{(i,j)}| + 2$ = $O(m + l^2)$. Therefore the total time complexity is $O(T_{SPT}(G)+ m + l^2)$.
Space complexity is $O(m + l^2)$ which is the space to store the RSP-DAG.
\qed
\end{proof}

Using the linear time algorithms for SPT, for integer weighted graphs,
minor closed graphs our algorithm takes $O(m + l^2)$ time.

\section{Node Replacement Paths}
\label{nrsp}
When the node $v_{i} \in P_{G}(s,t)$ is removed,
the SPT $T_{s}$ is partitioned as:
$T_{1}(v_{i})$ (sub-tree rooted at $s$), $T_{2}(v_{i})$ (sub-tree rooted at $v_{i+1}$) and
$F(v_{i})$ (the remaining forest $T_{s} \backslash \{T_{1}(v_{i}) \cup T_{2}(v_{i}) \cup {v_{i}}\}$).
The crossing edges are denoted as:
\begin{eqnarray}
  C'(v_{i}) &=&  \{ (x, y) \in E | x \in T_{1}(v_{i}) \wedge y \in T_{2}(v_{i}) \}  \\
  C''(v_{i}) &=&  \{ (x, y) \in E \backslash (v_{i}, v_{i+1})| x \in F(v_{i}) \wedge y \in T_{2}(v_{i}) \} \\
  C(v_{i}) &=&  C'(v_{i}) \cup C''(v_{i})
\end{eqnarray}

\begin{lem}
\label{lemmadistvrp}
For all $x \in T_{1}(v_{i})$, $d_{G-{v_{i}}}(s,x)$ = $d_{G}(s,x)$, and for all $y \in T_{2}(v_{i})$, $d_{G-{v_{i}}}(y,t)$ = $d_{G}(y,t)$.
\end{lem}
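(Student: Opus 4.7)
The plan is to treat the two claims independently, with the first being essentially immediate and the second mirroring the proof of Lemma \ref{lemmadist} with one extra algebraic step.

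For the first claim, I would start from the observation that the unique $s$-$x$ path in $T_{s}$ is a shortest path in $G$, so its length is exactly $d_{G}(s,x)$. Because $x \in T_{1}(v_{i})$ and $T_{1}(v_{i})$ is the component of $T_{s} \setminus \{v_{i}\}$ containing $s$, this tree path lies entirely inside $T_{1}(v_{i})$ and in particular does not pass through $v_{i}$. Hence it survives in $G \setminus \{v_{i}\}$, yielding $d_{G-v_{i}}(s,x) \leq d_{G}(s,x)$; the reverse inequality is trivial because $G \setminus \{v_{i}\}$ is a subgraph of $G$.

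For the second claim, I would argue by contradiction as in Lemma \ref{lemmadist}. Suppose $d_{G-v_{i}}(y,t) > d_{G}(y,t)$. Then every shortest $y$-$t$ path in $G$ must pass through $v_{i}$, so $d_{G}(y,t) = d_{G}(y,v_{i}) + d_{G}(v_{i},t)$. Since $v_{i+1}$ is the neighbour of $v_{i}$ on $P_{G}(s,t)$ towards $t$, the SPT gives $d_{G}(v_{i},t) = w(v_{i},v_{i+1}) + d_{G}(v_{i+1},t)$. Combining this with the triangle inequality $d_{G}(y,v_{i+1}) \leq d_{G}(y,v_{i}) + w(v_{i},v_{i+1})$ yields
\[
d_{G}(y,t) \;\geq\; d_{G}(y,v_{i+1}) + d_{G}(v_{i+1},t),
\]
and since the opposite inequality also holds by triangle inequality, the two sides are equal.

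To finish, I would exhibit a $y$-$t$ walk in $G \setminus \{v_{i}\}$ of this length by concatenating two tree paths in $T_{s}$: the path from $y$ up to $v_{i+1}$, which stays inside $T_{2}(v_{i})$ (because $y \in T_{2}(v_{i})$, the subtree rooted at $v_{i+1}$) and therefore avoids $v_{i}$, together with the path $v_{i+1}, v_{i+2}, \ldots, v_{l} = t$, which also avoids $v_{i}$. Each piece is a shortest path in $G$, so the total length is $d_{G}(y,v_{i+1}) + d_{G}(v_{i+1},t) = d_{G}(y,t)$, contradicting $d_{G-v_{i}}(y,t) > d_{G}(y,t)$. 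The only slightly delicate step is verifying that $v_{i+1}$ really is a witness both for the triangle inequality and for the tree-based bypass of $v_{i}$; once this role of $v_{i+1}$ is identified, the proof is a routine adaptation of Lemma \ref{lemmadist}.
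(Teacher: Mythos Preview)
Your proof is correct and follows essentially the same route as the paper: the paper omits the argument, noting only that it is analogous to Lemma~\ref{lemmadist}, and your write-up is precisely that adaptation (with $v_{i+1}$ playing the role that $v_{i}$ played there). The only difference is that you spell out the triangle-inequality step and the explicit $y \leadsto v_{i+1} \leadsto t$ bypass, which the paper's sketch leaves implicit.
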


\begin{proof}
We omit the proof as the proof is similar to lemma~\ref{lemmadist}
\qed
\end{proof}

Using Lemma~\ref{lemmadistvrp}, the length of the RSP is written as:
\begin{eqnarray}
  d'_{G-v_{i}}(s,t) &=&  \min_{(x,y) \in C'(v_{i})}\{ d_{G}(s,x) + w(x,y) + d_{G}(y,t) \}  \label{eq:5}\\
  d''_{G-v_{i}}(s,t) &=&  \min_{(x,y) \in C''(v_{i})}\{ d_{G-v_{i}-T_{2}(v_{i})}(s,x) + w(x,y) + d_{G}(y,t) \}\label{eq:6}\\
  d_{G-v_{i}}(s,t) &=& \min \{ d'_{G-v_{i}}(s,t), d''_{G-v_{i}}(s,t)\}
\end{eqnarray}

Having $T_{s}$ and $T_{t}$, all the terms in the equations (\ref{eq:5}) and (\ref{eq:6}) are available in constant time, except the distance $d_{G-v_{i}-T_{2}(v_{i})}(s,x)$ for $x \in F(v_{i})$ (partial shortest path distance).
We need all the partial shortest path distances $d_{G-v_{i}-T_{2}(v_{i})}(s,x)$, $\forall v_{i} \in P_{G}(s,t)$ and $\forall x \in F(v_{i})$.

To compute all the partial shortest path distances, we use the technique used in~\cite{jay} and~\cite{linearrsp}.

Let $G_{i}$ (corresponding to the vertex $v_{i}$) be the graph constructed from $G$ as follows:
The vertex set of $G_{i}$, $V(G_{i})$, consists of the source vertex $s$ and the vertices which are
part of the forest $F(v_{i})$. The edge set of $G_{i}$, $E(G_{i})$, consists of the following edges:

\begin{itemize}
  \item Edges between the nodes within the forest $F(v_{i})$.
        These edges will get the same edge weight as in $G$.
  \item For every $v \in F(v_{i})$, an edge $(s, v)$ is added whenever there
       is at least one edge from $T_{1}(v_{i})$ to $v$.
       The weight of this edge is calculated as follows:
       \begin{equation}
       \widetilde{w}(s,v) = \min_{(u,v)\in E(T_{1}(v_{i}), v))} \{d_{G}(s,u)+w(u, v)\}
       \end{equation}
  \end{itemize}
That is,
\begin{eqnarray}
  V(G_{i}) &=&  \{ V(F(v_{i}))\} \cup \{s\} \\
  E(G_{i}) &=&  \{ E(F(v_{i}), F(v_{i}))\} \cup \{  (s, v) | (v \in F(v_{i}) \wedge E(T_{1}(v_{i}), v) \neq \emptyset) \}
\end{eqnarray}

$G_{i}$ is a graph minor of $G$, since it can be obtained by edge contraction.
Hence, SPT, $T_{i}(s)$ of $G_{i}$, rooted at $s$ can be constructed in $T_{SPT}(G_{i})$ time.
Moreover, $d_{G-v_{i}-T_{2}(v_{i})}(s,x) = d_{G_{i}}(s, x)$ for any $x \in F(v_{i})$.
As $F(v_{i}) \cap F(v_{j}) = \emptyset$, for any $i \neq j$, $V(G_{i}) \cap V(G_{j}) = \{s\}$.
Construction of $G_{i}$ and $T_{i}(s)$ for all $i$ takes a total time of $O(\sum_{i=1}^{l-1}(T_{SPT}(G_{i}))$ = $O(T_{SPT}(G))$.
$d_{G-v_{i}-T_{2}(v_{i})}(s,x)$ for any $x \in F(v_{i})$ is available in constant time from $T_{i}(s)$ of $G_{i}$.

Instead of computing $l-1$ SPTs, $T_{i}(s)$, for all $1 \leq i \leq l-1$, we compute one graph,
$\widetilde{G} = \bigcup^{l}_{i=1}G_{i}$, where $G_{i}$ is constructed as explained earlier.
$\widetilde{G}$ can be constructed from $G$ in $O(m+n)$ time. Single source shortest path tree rooted at $s$,
$\widetilde{T}_{s}$ of $\widetilde{G}$ is computed in $O(T_{SPT}(\widetilde{G})) = O(T_{SPT}(G))$ time. $d_{G-v_{i}-T_{2}(v_{i})}(s,x)$
for any $x \in F(v_{i})$ is available in constant time from $\widetilde{T}_{s}$ of $\widetilde{G}$.
Moreover, as $C''(v_{i}) \cap C''(v_{j}) = \emptyset, \forall (i \neq j)$,
the distances $d''_{G-v_{i}}(s,t)$ for all $v_{i}$ are available in linear time.

To compute $d'_{G-v_{i}}(s,t)$ for all $v_{i}$, we use the RSP-DAG. We use the vertex labeling on $T_{s}$ (as computed in Section~\ref{labeling}),
for a non-tree edge $(x,y)$, $(x,y) \in C'(v_{i})$
if and only if $label(x) < i$ and $label(y) > i$. In other words, for a non-tree edge $(x,y)$, if $label(x) = i$
and $label(y) = i + r$ for some $r > 1$, then $(x,y) \in C'(v_{j})$, for all $i < j < i + r$.

Hence, the crossing edges $C'(v_{i})$ will be part
of edge sets associated with the vertices $(i, i+r), r > 1$ in the RSP-DAG. After the BFS traversal on the RSP-DAG,
the minimum cost crossing edge (over $C'(v_{i})$) for $v_{i}$ is available in the edge set of the node $(i-1, i+1)$ in the RSP-DAG.
We do not need to perform the BFS traversal on the RSP-DAG again, because, the data populated during the BFS traversal for the edge replacement
paths suffices.

If we have the swap edge $(x, y)$ for the vertex $v_{i}$,
we can report the RSP in constant time as $s \leadsto x \rightarrow y \leadsto t$. Here $s \leadsto x$
is available from $T_{s}$ if $(x, y) \in C'(v_{i})$. It is constructed from SPTs $T_{s}$ and $\widetilde{T}_{s}$  if
$(x, y) \in C''(v_{i})$.

\begin{thm}
\label{thm3}
There is an algorithm for the
Node Replacement Path problem that runs in $O(T_{SPT}(G) + m + l^2)$ time using $O(m + l^2)$ space.
\end{thm}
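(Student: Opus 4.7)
The plan is to mirror the structure of the proof of Theorem~\ref{thm2}, with the new ingredient being the auxiliary minor graph $\widetilde{G}$. I will split the argument into three pieces whose costs I then sum: (a) computing the single-source trees $T_s$ and $T_t$; (b) constructing $\widetilde{G}$ and its SPT $\widetilde{T}_s$; (c) reusing the RSP-DAG already built in the proof of Theorem~\ref{thm2}, plus a single disjoint linear scan that handles the $C''$ sets.

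For (a) the cost is $T_{SPT}(G)$ by definition. For (b), I would use the node labels from Section~\ref{labeling} to stream over the non-tree edges of $G$ once, and for every edge $(u,v)$ with $u \in T_1(v_i)$ and $v \in F(v_i)$ relax the candidate weight $\widetilde{w}(s,v)$; this builds $\widetilde{G}$ in $O(m+n)$. Because $\widetilde{G}$ is obtained from $G$ by contracting each $T_1(v_i)$ into $s$, it is a minor of $G$ with $|V(\widetilde{G})|\le n$ and $|E(\widetilde{G})|\le m$; hence $\widetilde{G}$ inherits any minor-closed property of $G$ that makes $T_{SPT}$ linear, and in every case $T_{SPT}(\widetilde{G})=O(T_{SPT}(G))$. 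Once $T_s$, $T_t$ and $\widetilde{T}_s$ are available, every term in equations (\ref{eq:5}) and (\ref{eq:6}) is an $O(1)$ lookup, using $d_{G-v_i-T_2(v_i)}(s,x)=d_{\widetilde{G}}(s,x)$.

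For (c) I would argue that the BFS already carried out on the RSP-DAG delivers $d'_{G-v_i}(s,t)$ with no additional work: any $(x,y)\in C'(v_i)$ satisfies $label(x)\le i-1$ and $label(y)\ge i+1$ and is placed at the ancestor DAG node $(label(x),label(y))$ of $(i-1,i+1)$, so the BFS propagation stores exactly $\min_{(x,y)\in C'(v_i)}\{d_G(s,x)+w(x,y)+d_G(y,t)\}$ at the DAG node $(i-1,i+1)$; conversely, edges with $label(y)=i$ occupy nodes whose descendants have second coordinate at most $i$, so they never reach $(i-1,i+1)$ and do not contaminate the minimum. The sets $C''(v_i)$ are pairwise disjoint, so a single linear scan over the non-tree edges produces every $d''_{G-v_i}(s,t)$ in $O(m+l)$ total, and taking the pointwise minimum gives $d_{G-v_i}(s,t)$ in an additional $O(l)$. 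Summing (a)--(c) yields the claimed $O(T_{SPT}(G)+m+l^2)$ time; the space is dominated by the RSP-DAG at $O(m+l^2)$, since $\widetilde{G}$ and $\widetilde{T}_s$ fit in $O(m+n)=O(m)$ for connected $G$.

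The step that I expect to need the most care is the clean separation at the DAG node $(i-1,i+1)$: arguing that the BFS minimum stored there is exactly $\min_{C'(v_i)}$, neither under- nor over-counting. This reduces to the simple label-arithmetic observation above, after which the remainder of the analysis is routine bookkeeping inherited from Theorem~\ref{thm2}.
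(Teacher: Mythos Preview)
Your proposal is correct and follows essentially the same approach as the paper: split the cost into building $T_s,T_t$, building $\widetilde{G}$ and $\widetilde{T}_s$ for the $d''$ part, and reusing the RSP-DAG for the $d'$ part by reading the answer at node $(i-1,i+1)$. The paper's proof is terser---it simply asserts that computing all $d''_{G-v_i}(s,t)$ costs $O(T_{SPT}(G)+m+n)$ and that the RSP-DAG gives all $d'_{G-v_i}(s,t)$ in $O(m+l^2)$---but the substance is identical to what you wrote.
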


\begin{proof}
$T_{SPT}(G)$ represents the time to compute SPTs $T_{s}$ and $T_{t}$.
Computing the distances $d''_{G-v_{i}}(s,t)$ for all $v_{i}$ takes $O(T_{SPT}(G) + m + n)$ time.
Computing $d'_{G-v_{i}}(s,t)$ for all $v_{i}$ using the RSP-DAG takes $O(m + l^2)$ time
and $O(m + l^2)$ space. Therefore the total time complexity is $O(T_{SPT}(G) + m + l^2)$.
Space complexity is $O(m + l^2)$ which is the space necessary to store the RSP-DAG.
\qed
\end{proof}

Using the linear time algorithms for SPT, for integer weighted graphs,
minor closed graphs our algorithm takes $O(m + l^2)$ time.

Proof of Theorem~\ref{thm1} is implied from theorems~\ref{thm2} and~\ref{thm3}.

\section{Conclusions}
\label{conc}
In this paper, we propose an $O(T_{SPT}(G) + m + l^2)$ time  and $O(m + l^2)$
space algorithm for the Replacement Paths problem. The asymptotic complexity of our algorithm matches
with that of Jay and Saxena~\cite{jay}. However,
our algorithm does not require external algorithms RMQ and integer sorting and it is easy to implement.
An advantage of our algorithm over~\cite{jay} and~\cite{linearrsp} is that, in a single iteration
both the edge and node replacement paths can be computed, whereas in~\cite{jay} and~\cite{linearrsp}
the algorithm has to be run independently for the edge and node replacement paths.
For dense graphs and graphs with small diameter our algorithm is optimal and matches that of~\cite{linearrsp}.
\bibliographystyle{splncs}
\bibliography{RSP_IPL}
\end{document}